\documentclass[letterpaper, 10 pt, conference]{ieeeconf}
\usepackage[T1]{fontenc}
\IEEEoverridecommandlockouts                          

\usepackage{amsmath} 
\usepackage{amssymb} 
\usepackage{xcolor}
\usepackage{caption}
\usepackage{graphicx}
\usepackage{mathtools}
\usepackage{caption}
\usepackage{floatrow}
\usepackage{tikz}
\usepackage{tikz-cd}
\usepackage{url}

\usepackage{algorithmicx}
    \usepackage[ruled]{algorithm}
    \usepackage{algpseudocode}

\usepackage{pgfplots}
\usepackage{xcolor}
\usetikzlibrary{matrix,arrows,calc,positioning,shapes,decorations.pathreplacing}
\usepackage{graphicx}

\usepackage{amsmath} 

\usepackage{enumerate}
\usepackage[all,tips]{xy}
\SelectTips{cm}{11}

\newtheorem{theorem}{Theorem}

\newtheorem{definition}{Definition}
\newtheorem{remark}{Remark}
\newtheorem{proposition}{Proposition}

\usepackage{amssymb}

\newcommand{\proa}{A^*G \mbox{$\;$}_{\tau^*} \kern-3pt\times_\alpha
G \mbox{$\;$}_\beta \kern-3pt\times_{\tau^*} A^*G}

\begin{document}

\title{Stable Walking for Bipedal Locomotion under Foot-Slip\\ via Virtual Nonholonomic Constraints}

\author{ Leonardo Colombo$^{1}$, \'Alvaro Rodr\'iguez Abella$^{2}$, Alexandre Anahory Simoes$^{3}$, Anthony Bloch$^{4}$. 
\thanks{$^{1}$ (leonardo.colombo@car.upm-csic.es) is with Centre for Automation and Robotics (CSIC-UPM), Ctra. M300 Campo Real, Km 0,200, Arganda
del Rey - 28500 Madrid, Spain.}
\thanks{$^{2}$ (arabella@comillas.edu) is with the Department of Applied Mathematics, Comillas Pontifical University, Madrid, 28015 - Madrid, Spain.}
\thanks{$^{3}$ (alexandre.anahory@ie.edu) School of Science and Technology, IE University, Madrid, Spain.}
\thanks{$^{4}$ (abloch@umich.edu) is with the Department of Mathematics, University of Michigan, Ann Arbor, MI 48109, USA.}
\thanks{The authors acknowledge financial support from Grants  PID2022-137909NB-C21 funded by MCIN/AEI/10.13039/501100011033, and from  NSF grant  DMS-2103026 and AFOSR grants FA
9550-22-1-0215 and FA 9550-23-1-0400. A.R.A. acknowledges financial support from grant PID2024-156578NB-I00 funded by MICIU/AEI/10.13039/501100011033/FEDER, EU. L. C. was supported in part by iRoboCity2030-CM, Robótica Inteligente para Ciudades Sostenibles (TEC-2024/TEC-62), funded by the Programas de Actividades I+D en Tecnologías en la Comunidad de Madrid.
}}%

\maketitle
\begin{abstract}
Foot slip is a major source of instability in bipedal locomotion on low-friction or uncertain terrain. Standard control approaches typically assume no-slip contact and therefore degrade when slip occurs. We propose a control framework that explicitly incorporates slip into the locomotion model through virtual nonholonomic constraints, which regulate the tangential stance-foot velocity while remaining compatible with the virtual holonomic constraints used to generate the walking gait. The resulting closed-loop system is formulated as a hybrid dynamical system with continuous swing dynamics and discrete impact events. A nonlinear feedback law enforces both classes of constraints and yields a slip-compatible hybrid zero dynamics manifold for the reduced-order locomotion dynamics. Stability of periodic walking gaits is characterized through the associated Poincaré map, and numerical results illustrate stabilization under slip conditions.
\end{abstract}

\section{Introduction}

Robust locomotion on uncertain terrain remains a central challenge in legged robotics \cite{dosunmu2024demonstrating, verhagen2022human, siekmann2021blind}.
In particular, foot slip is a frequent source of performance degradation and loss of stability when walking on low-friction or poorly characterized surfaces \cite{kaneko2005slip, chen2016balance}.
Most control strategies for bipedal locomotion assume ideal no-slip contact conditions, as in the standard hybrid locomotion framework of \cite{Westervelt2007}, which simplifies the modeling and control design but leads to poor performance when tangential contact forces exceed the friction limits. As a result, slip is typically treated as a disturbance to be rejected rather than as a phenomenon that can be explicitly modeled and controlled \cite{kaneko2005slip}.

A large body of work has addressed gait stabilization using hybrid dynamical system models that combine continuous swing dynamics with discrete impact events at foot touchdown \cite{Westervelt2007, ames2012first}.
In this framework, periodic walking gaits are typically stabilized through the design of virtual constraints and the analysis of hybrid zero dynamics.
These techniques have been successfully applied to a wide range of robotic platforms and have led to strong theoretical guarantees such as orbital stability of periodic gaits \cite{Grizzle2018}.
However, the standard formulation assumes no-slip contact of the stance foot with the ground during the support phase, which restricts the applicability of the framework to high-friction environments. Modern optimization-based locomotion methods can explicitly enforce
contact and friction constraints; see, e.g., \cite{hereid2018dynamic}.
Our approach is complementary, incorporating regulated tangential
foot motion directly into the HZD framework.

When slip occurs, the stance foot is no longer fixed relative to the ground, and the resulting dynamics deviate from the nominal no-slip model. Rather than enforcing zero tangential velocity at the stance foot, it is natural to consider models in which tangential motion is allowed but regulated.
From a geometric perspective, this situation can be interpreted as the introduction of nonholonomic constraints associated with the allowable slip directions of the biped (for the general theory of nonholonomic constraints see
e.g., \cite{Bloch2003}).
Motivated by this observation, this paper proposes a control framework that explicitly incorporates slip into the locomotion dynamics through the design of \emph{virtual affine nonholonomic constraints}~\cite{stratoglou2023virtual}.
These constraints regulate tangential foot velocity and shape the closed-loop dynamics in a way that remains compatible with the hybrid structure of legged locomotion.

Virtual nonholonomic constraints are a class of virtual constraints that depend on the velocities of the system, rather than only on its configuration variables. They were introduced in~\cite{griffin2015nonholonomic,griffin2017nonholonomic} to design velocity-based swing-foot placement laws in bipedal robots. More recently, their geometric structure and control-theoretic role have been further developed in~\cite{simoes2023virtual,stratoglou2024geometry,simoes2025geometric}. In addition, this class of virtual constraints has been used in~\cite{horn2018hybrid,hamed2019nonholonomic,horn2020nonholonomic,horn2021nonholonomic} to encode velocity-dependent stable walking gaits through momenta conjugate to the unactuated degrees of freedom of legged robots and prosthetic legs.

Related slip-aware walking models based on hybrid zero dynamics enrich the reduced-order description by incorporating additional slip variables and slip-dependent gait transitions~\cite{chen2017hybrid,TrkovChenYi2019}. More recently, bipedal control under foot slip has also been addressed through integrated reduced-order and whole-body control \cite{mihalec2022integrated}. The viewpoint adopted here is different: instead of treating tangential foot motion only as an additional reduced-order effect, we regulate it directly at the level of the closed-loop kinematics through a virtual nonholonomic constraint designed to remain compatible with the nominal holonomic gait structure. In this way, admissible tangential motion is incorporated into the hybrid gait design through an explicit velocity-level mechanism.

The main contribution of this work is the construction of a slip-compatible hybrid zero dynamics manifold through a virtual affine nonholonomic constraint that regulates tangential stance-foot motion while remaining compatible with the nominal virtual holonomic gait constraints. We formulate a hybrid locomotion model in which admissible tangential contact motion is incorporated through a virtual affine nonholonomic constraint compatible with a nominal set of virtual holonomic constraints. We then derive a nonlinear feedback law that simultaneously enforces both classes of constraints and yields a slip-compatible hybrid zero dynamics manifold. Finally, we characterize local orbital stability of the resulting periodic gait through the associated Poincaré return map and validate the approach numerically under variable slip conditions.

The remainder of the paper is organized as follows.
Section~II introduces the locomotion model with foot-slip and reviews virtual holonomic constraints and the design of stable walking gaits used in locomotion. Section~III presents the design of the virtual nonholonomic constraints and develops the control design and stability analysis.
Section~IV illustrates the approach through numerical simulations.

\section{Hybrid Locomotion Framework}

\subsection{Hybrid locomotion model for walking with foot slip}\label{secIIA}

Bipedal walking is naturally modeled as a hybrid dynamical system consisting of continuous swing dynamics and discrete impact events. Let us briefly recall the fundamental facts of this model that will be useful in the forthcoming sections. For a full description of the bipedal walker, we refer for instance to~\cite{Westervelt2007}.


Let $Q\subset \mathbb{R}^n$ denote the configuration space of the free-floating model of the biped, with local coordinates $q \in Q$. The state space is thus the tangent bundle $TQ$, whose elements are denoted by $x=(q,\dot q)$ and represent positions and generalized velocities.
The continuous-time dynamics during the single-support phase is given by the Euler--Lagrange equations
\begin{equation}\label{ELeq}
D(q)\ddot q + C(q,\dot q)\dot q + G(q) = B(q) u + J_c(q)^\top \lambda ,
\end{equation}
where $D(q) \in \mathbb{R}^{n\times n}$ is the inertia matrix, $C(q,\dot q)\in\mathbb R^{n\times n}$ is the Coriolis matrix, including Coriolis and centrifugal terms, and $G(q)\in\mathbb R^n$ models gravitational forces.
The control input $u \in \mathbb{R}^m$ consists of the joint torques applied by the actuators, with 
$B(q) \in \mathbb{R}^{n\times m}$ denoting the actuation matrix that maps the control inputs to generalized forces.
Finally, $J_c(q)\in\mathbb R^{2\times n}$ denotes the Jacobian of the stance-foot contact constraint, and $\lambda=(\lambda_t,\lambda_n)\in\mathbb R^2$ contains the tangential and normal ground reaction forces.



Let $p_f(q) \in \mathbb{R}^2$ denote the Cartesian position of the stance foot, with velocity $v_f(q,\dot q)=\dot p_f(q,\dot q)=J_f(q)\dot q$,
where $J_f(q)=\frac{\partial p_f}{\partial q}(q)\in\mathbb R^{2\times n}$ is the Jacobian of the stance-foot contact point. In the present setting, this Jacobian coincides with the contact Jacobian introduced above, i.e., $J_f(q)=J_c(q)$. In the standard rigid-contact model for bipedal locomotion (see, for instance,~\cite[Eq. (3.19)]{Westervelt2007}), the stance foot is assumed to be fixed relative to the ground, that is, $v_f(q,\dot q)=0$. This no-slip assumption is valid only when the tangential contact force remains within the admissible friction cone, namely, $|\lambda_t|\leq \mu \lambda_n$, where $\mu>0$ is the static friction coefficient. When this friction bound is violated, the no-slip constraint breaks down and the stance foot may slide along the contact surface.

Throughout the paper, we restrict attention to a physically admissible
sliding-contact domain in which the normal contact force remains
compressive and the contact forces are compatible with the adopted
friction model. Denoting the corresponding admissible contact-force set
by $\mathcal{F}_{\rm slip}(q,\dot q)$, we require $\lambda=(\lambda_t,\lambda_n)\in
    \mathcal{F}_{\rm slip}(q,\dot q)$, $\lambda_n\geq 0$. Thus, the virtual constraint introduced below regulates the tangential
contact kinematics within this admissible domain; it does not replace
the physical contact conditions determining whether a prescribed slip
motion is realizable.

To account for slip, we relax the no-slip condition by allowing tangential motion at the contact while preserving normal contact.
Accordingly, the stance-foot velocity is decomposed as
$v_f = v_n + v_t$, where $v_n$ and $v_t$ are the normal and tangential components, respectively.
During the stance phase we impose the contact condition, $v_n = 0$, but allow for the tangential component to be non-zero, $v_t \neq 0$.
Thereby, tangential slip is explicitly incorporated into the locomotion model. 


When the swing foot strikes the ground, the hybrid system undergoes a discrete transition.
The switching surface is denoted by $S = \{(q,\dot q)\in TQ \mid h_{\mathrm{sw}}(q)=0,\;\dot h_{\mathrm{sw}}(q,\dot q)<0\}$. Note that if $p_{\mathrm{sw}}(q)=(p_{\mathrm{sw},x}(q),p_{\mathrm{sw},y}(q))\in\mathbb R^2$ denotes the Cartesian position of the swing foot, then $h_{\mathrm{sw}}(q)=p_{\mathrm{sw},y}(q)$ is its height relative to the ground. Thus, $S$ characterizes touchdown configurations for which the swing foot reaches the ground with negative normal velocity. At impact, we assume a perfectly inelastic collision, so that the configuration remains continuous while the velocity undergoes an instantaneous jump. The corresponding reset map is denoted by $\Delta:S\to TQ,\,\Delta(q,\dot q^-)=(q,\dot q^+)$, where $\dot q^-$ and $\dot q^+$ denote the pre- and post-impact velocities, respectively.
The post-impact velocity is determined by the impulse--momentum relation
\begin{equation}
D(q)(\dot q^+ - \dot q^-) = J_i(q)^\top \Lambda,
\end{equation}
where $J_i(q)\in\mathbb R^{2\times n}$ denotes the Jacobian of the impact contact constraint at touchdown and $\Lambda\in\mathbb R^2$ is the impulsive contact force.

Combining the continuous swing dynamics and the impact map, and denoting $x=(q,\dot q)\in TQ$, we obtain a hybrid system of the form
\begin{align}
\dot x &= f(x)+g(x)u, \qquad x\notin S,\label{cd}\\
x^+ &= \Delta(x^-), \qquad x^- \in S,\label{dd}
\end{align}
where $f$ and $g$ are induced by~\eqref{ELeq}. We refer to \eqref{cd}--\eqref{dd} as the hybrid locomotion system. System \eqref{cd}--\eqref{dd} has the standard hybrid structure used in bipedal locomotion, but differs from the classical nonslip form in that the continuous dynamics is written in free-floating coordinates and explicitly retains the contact forces, thereby allowing tangential slip during the stance phase.



\subsection{Virtual holonomic constraints and hybrid zero dynamics}

A common approach to stabilizing periodic walking gaits is based on the
design of virtual holonomic constraints (VHCs), which coordinate
the evolution of the robot joints through feedback control
\cite{Westervelt2007, ames2012first}.

A VHC is defined by an output function $y:Q \to \mathbb{R}^k$ that specifies desired relations among the configuration variables.
In locomotion, these relations are typically parameterized
through a phasing variable along the gait. Accordingly, suppose that a desired gait is encoded through the output
\begin{equation}\label{VHC}
y:Q\to\mathbb R^k,\quad y = h - h_d\circ\theta,
\end{equation}
where $h:Q\to\mathbb{R}^k$ provides the variables to be controlled and
$h_d\circ\theta:Q\to\mathbb R^k$ gives the desired evolution of such variables, which is parameterized by a phase
variable $\theta:Q\to\mathbb R$ that is assumed to be strictly
monotonic along the gait. In the standard VHC setting, one typically chooses the number of outputs to match the number of actuated inputs. In the present slip-aware setting, however, one scalar control direction is reserved for slip regulation, so that the holonomic output dimension is chosen as $k=m-1$.

The virtual constraint is enforced by driving the output to zero,
which defines the \emph{zero dynamics manifold}
\begin{equation}\label{Z}
\mathcal{Z} =
\{(q,\dot q) \in TQ \mid y(q)=0,~\dot y(q,\dot q)=0 \}.
\end{equation}
Since the output depends only on the configuration, its first time derivative is
$\dot y(q,\dot q)=\frac{\partial y}{\partial q}(q)\dot q$.
Therefore, $\mathcal Z$ is defined independently of the feedback law.

If the feedback controller renders $\mathcal{Z}$ invariant under the
continuous dynamics, the closed-loop system restricted to
$\mathcal{Z}$ defines the \emph{zero dynamics}. If $\mathcal{Z}$ is also invariant under the impact map,
i.e. $\Delta(\mathcal{Z}\cap S) \subset \mathcal{Z}$, then the dynamics restricted to $\mathcal{Z}$ define the
\emph{hybrid zero dynamics} (HZD).

The hybrid zero dynamics provide a reduced-order model that captures the gait evolution while satisfying the virtual constraints.
Stability of the walking gaits can then be analyzed by studying the periodic orbit of the hybrid system restricted to $\mathcal{Z}$; see \cite{westervelt2003hybrid} for details.

In the classical nonslip walking model, the stance foot is subject to a physical no-slip contact condition. When foot slip is present, this contact condition can no longer be assumed to hold, and the tangential stance-foot velocity must instead be regulated through feedback.
This motivates the introduction of virtual nonholonomic constraints, which encode the desired slip behavior while preserving the hybrid structure of the locomotion dynamics.



\subsection{Output dynamics, gait parameterization and stability}

The virtual constraints introduced above are enforced through outputs chosen to have relative degree two on the domain of interest. Consider the output function~\eqref{VHC} with state $x=(q,\dot q)\in TQ$. Differentiating twice along the dynamics~\eqref{cd} gives $\ddot y = L_f^2 y(x) + L_g L_f y(x)\,u$, where $L_f$ and $L_g$ denote Lie derivatives along the drift and input vector fields, respectively. Assume that the decoupling matrix $L_gL_f y(x)$ is nonsingular on the domain of interest. Then the input--output linearizing controller
\[
u=-(L_gL_f y)^{-1}\big(L_f^2 y+K_dL_f y+K_p y\big),
\]
with $K_p,K_d\in\mathbb R^{k\times k}$ positive definite, yields the closed-loop output dynamics $\ddot y + K_d \dot y + K_p y = 0$. Consequently, the output converges exponentially to zero, and the zero dynamics manifold~\eqref{Z} is rendered invariant and locally exponentially attractive.

The desired gait is typically parameterized using Bézier polynomials. Let $\displaystyle{h_d(\theta) =
\sum_{i=0}^{M} \alpha_i
\binom{M}{i}
\theta^i (1-\theta)^{M-i}}$, where $\theta(q)$ is a strictly monotonic phasing variable that
parameterizes the progression of the gait along the step, and
$\alpha_i$ are Bézier coefficients specifying the desired gait profile.
The coefficients at the endpoints determine the desired values at the beginning and end of the step, while the remaining coefficients shape the transition. 


The stability of the resulting periodic walking gait is analyzed
through the Poincaré map associated with the hybrid system \cite{morris2005restricted, goodman2019existence}.
Let $P$ denote the Poincaré return map defined on a transversal
section of the switching surface $S$.
If $x^\ast$ is a fixed point of $P$, i.e., $P(x^\ast) = x^\ast$, then the corresponding orbit represents a periodic walking gait. Orbital stability of the gait is determined by the eigenvalues of the
linearized Poincaré map $A_P=DP(x^\ast)$. If all eigenvalues lie strictly inside the unit circle, the periodic
gait is locally exponentially stable.

\section{Virtual Nonholonomic Constraints for Slip Regulation and Stable Walking}

Next, we introduce virtual affine nonholonomic constraints that regulate the slip velocity through feedback control and make them compatible with the virtual holonomic constraints used to encode the nominal walking gait.

\subsection{Virtual affine nonholonomic constraints}

To explicitly account for stance-foot slip, we introduce a class of velocity constraints that prescribe admissible tangential motion at the contact point. Let $p_f(q)=(p_{f,x}(q),p_{f,y}(q))\in\mathbb R^2$ denote the Cartesian position of the stance foot. In the planar setting considered here, the contact surface is locally modeled as the horizontal ground $\mathcal C=\{(x,y)\in\mathbb R^2 \mid y=0\}$, so that its tangent space is spanned by the horizontal direction $e_t=(1,0)^\top$. Hence, if $J_f(q)=\frac{\partial p_f}{\partial q}(q)\in\mathbb R^{2\times n}$, the tangential stance-foot velocity is given by $v_t(q,\dot q)=e_t^\top J_f(q)\dot q$.

Instead of imposing the no-slip condition $v_t=0$, we prescribe an admissible slip law of the form
\begin{equation}
v_t(q,\dot q)=b(q),
\label{eq:slip_law_sec3}
\end{equation}
where $b:Q\to\mathbb{R}$ is a smooth map specifying the desired tangential slip profile. Equation~\eqref{eq:slip_law_sec3} can be written as
\begin{equation}
A(q)\dot q=b(q),
\label{eq:affine_constraint_sec3}
\end{equation}
with $A(q):=e_t^\top J_f(q)\in\mathbb R^{1\times n}$. This defines, for each $q\in Q$, an affine subspace of admissible velocities,
\begin{equation}
\mathcal D(q)=\{\,\dot q\in T_qQ \mid A(q)\dot q=b(q)\,\}.
\label{eq:affine_distribution_sec3}
\end{equation}

The admissible velocities in \eqref{eq:affine_distribution_sec3} are understood together with
the sliding-contact admissibility conditions introduced in
Section~\ref{secIIA}. Hence, the virtual affine nonholonomic constraint specifies
the desired tangential kinematics, while contact-force feasibility is a
separate physical requirement on the resulting closed-loop trajectory.

\begin{definition}
The affine family of admissible velocities \eqref{eq:affine_distribution_sec3} is said to define a \emph{virtual affine nonholonomic constraint} if there exists a feedback law $u=\alpha(q,\dot q)$ such that the set
\[
\mathcal Z_{\mathrm{slip}}
:=
\{(q,\dot q)\in TQ \mid A(q)\dot q=b(q)\}
\]
is invariant under the closed-loop continuous-time dynamics.
\end{definition}

In other words, if the initial condition satisfies the affine velocity constraint, then the closed-loop trajectory satisfies it for all future times during the stance phase. In this case, the slip motion is not treated as a disturbance, but rather as a regulated component of the locomotion dynamics.

Whereas virtual holonomic constraints coordinate configuration variables through outputs of relative degree two, the constraints in \eqref{eq:affine_constraint_sec3} act directly at the velocity level and encode admissible tangential contact motion during stance.

Define the slip output $\eta_s:TQ\to\mathbb{R}$ by $\eta_s(q,\dot q):=A(q)\dot q-b(q)$. Along the continuous-time dynamics \eqref{ELeq} its derivative is
\begin{equation}
\dot\eta_s
=
\Gamma_s(q,\dot q,\lambda)+M_s(q)u,
\label{eq:slip_output_dynamics}
\end{equation}
where
\begin{align}
\Gamma_s(q,\dot q,\lambda)
&:=
A(q)D(q)^{-1}\!\left(-C(q,\dot q)\dot q-G(q)+J_c(q)^\top\lambda\right)
\nonumber\\
&\qquad
+\dot A(q,\dot q)\dot q-\frac{\partial b}{\partial q}(q)\dot q,
\label{eq:Gamma_s}
\\
M_s(q)&:=A(q)D(q)^{-1}B(q)\in\mathbb R^{1\times m}.
\label{eq:Ms}
\end{align} 
{We assume that the contact force $\lambda$ is available to the
controller, either from force sensing or from a contact-force observer.
Accordingly, $\lambda$ is treated as a measured/estimated signal in
\eqref{eq:slip_output_dynamics}-\eqref{eq:Gamma_s}, rather than as an
unknown algebraic variable to be determined by the feedback law. Here, the stance phase domain denotes the subset of $TQ$ in which the stance foot remains in contact with the ground and no impact event occurs, so that the continuous-time dynamics \eqref{ELeq} are valid.

\begin{proposition}
Assume that $A(\cdot)$ has constant rank $1$ on the stance phase domain, and that $M_s(q)$ in \eqref{eq:Ms} has full row rank $1$ for every $q$ in that domain. Then, locally on the stance phase domain, there exists a smooth feedback law rendering the constraint \eqref{eq:affine_constraint_sec3} invariant under the closed-loop continuous-time dynamics.
\end{proposition}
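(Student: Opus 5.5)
The plan is to treat the slip output $\eta_s(q,\dot q)=A(q)\dot q-b(q)$ as a relative-degree-one output. We then use the full-row-rank condition on $M_s(q)$ to solve \eqref{eq:slip_output_dynamics} for a control that makes $\dot\eta_s$ vanish, or decay, along closed-loop trajectories. Invariance of $\mathcal Z_{\mathrm{slip}}=\eta_s^{-1}(0)$ follows once $\eta_s\equiv 0$ is shown to be preserved.

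First, fix a point in the stance phase domain. Since $M_s(q)\in\mathbb R^{1\times m}$ has rank $1$, the scalar $M_s(q)M_s(q)^\top$ is strictly positive there. By continuity of $D^{-1}$, $A$ and $B$, it stays positive on a neighborhood, and it is smooth there. On this neighborhood we define
\[
u=\alpha(q,\dot q):=-M_s(q)^\top\big(M_s(q)M_s(q)^\top\big)^{-1}\big(\Gamma_s(q,\dot q,\lambda)+k\,\eta_s(q,\dot q)\big)+N(q)\,w,
\]
with $k\ge 0$. Here $N(q)$ spans $\ker M_s(q)$, which has constant dimension $m-1$ by the rank assumption. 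The free signal $w$ is reserved for enforcing the holonomic outputs later and may be set to zero for the present statement. As in the discussion preceding the proposition, $\lambda$ is treated as a measured signal, so $\alpha$ is a smooth function of the available quantities. Smoothness of $\Gamma_s$ follows from smoothness of $D,C,G,J_c,A,b$; note that $\dot A(q,\dot q)\dot q$ is quadratic in $\dot q$.

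Second, we substitute into \eqref{eq:slip_output_dynamics}. Using $M_sN=0$ and $M_sM_s^\top(M_sM_s^\top)^{-1}=1$, we obtain $\dot\eta_s=-k\eta_s$. Hence $\eta_s(t)=e^{-kt}\eta_s(0)$ along any closed-loop solution that remains in the neighborhood and in the stance domain. In particular, $\eta_s(0)=0$ implies $\eta_s(t)=0$ for all such $t$. This is exactly invariance of $\mathcal Z_{\mathrm{slip}}$ in the sense of the Definition, and for $k>0$ it gives attractivity as a bonus. The constant-rank hypothesis on $A$ ensures that $\mathcal Z_{\mathrm{slip}}$ is a smooth embedded codimension-one submanifold of $TQ$, since $\partial\eta_s/\partial\dot q=A(q)\neq 0$. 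The invariant set is therefore a genuine manifold, and the closed-loop vector field is tangent to it.

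The main subtlety, rather than a true obstacle, is the locality and the role of $\lambda$. The feedback is defined only where $M_sM_s^\top\neq0$, which is why the conclusion is local. Also, $\lambda$ itself depends on the dynamics through the sliding-contact model. If one instead wanted $\lambda$ eliminated as an algebraic variable, one would have to check that the coupled system for $(\ddot q,\lambda)$ remains solvable. We avoid this by invoking the standing assumption that $\lambda$ is available to the controller. Finally, we would remark that existence of closed-loop solutions follows from local Lipschitz continuity of the smooth closed-loop vector field.
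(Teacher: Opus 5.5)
Your proposal is correct and follows essentially the same route as the paper: you treat $\eta_s$ as a relative-degree-one output, use a smooth right inverse of $M_s(q)$ to impose $\dot\eta_s=-k\eta_s$, and conclude invariance of $\mathcal Z_{\mathrm{slip}}$ from $\eta_s(0)=0$, with $\lambda$ treated as a measured signal. Your explicit $M_s^\top(M_sM_s^\top)^{-1}$ is one particular choice of the paper's generic right inverse $M_s^\dagger$, and the null-space term $N(q)w$ and the allowance $k\ge 0$ (the paper takes $k_s>0$) are harmless additions.
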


\begin{proof}
Consider the constraint function $F:TQ\to\mathbb{R}$, $F(q,\dot q):=A(q)\dot q-b(q)=\eta_s(q,\dot q)$. By assumption, $A(q)$ has constant rank $1$; hence $F$ has constant rank $1$ with respect to the fiber variable $\dot q$. Therefore, $\mathcal Z_{\mathrm{slip}}=F^{-1}(0)$ is an embedded submanifold of $TQ$ of codimension $1$.

Differentiating $F$ along trajectories of \eqref{ELeq} yields \eqref{eq:slip_output_dynamics}. Since $M_s(q)$ has full row rank, for each $q$ there exists a smooth local right inverse $M_s^\dagger(q)$ satisfying
$M_s(q)M_s^\dagger(q)=1$. Fix any scalar gain $k_s>0$ and define
\begin{align}
u
=&
M_s^\dagger(q)\bigl(-\Gamma_s(q,\dot q,\lambda)-k_s\eta_s(q,\dot q)\bigr).
\label{eq:slip_invariance_control}
\end{align}
Substituting \eqref{eq:slip_invariance_control} into \eqref{eq:slip_output_dynamics} gives
\[
\dot\eta_s
=
\Gamma_s+M_sM_s^\dagger(-\Gamma_s-k_s\eta_s)
=
-k_s\eta_s.
\]
Hence $\eta_s(\cdot)$ satisfies the linear homogeneous equation $\dot\eta_s=-k_s\eta_s$. If the initial condition lies in $\mathcal Z_{\mathrm{slip}}$, then $\eta_s(0)=0$ and therefore $\eta_s(t)\equiv 0$ for all times for which the solution exists. Thus $\mathcal Z_{\mathrm{slip}}$ is invariant under the closed-loop continuous-time dynamics.
\end{proof}

\begin{remark}
Since the slip constraint is scalar in the present planar model, the full-row-rank condition on $M_s(q)$ reduces to $M_s(q)\neq 0$. This is the coordinate-level algebraic counterpart of the transversality condition between the affine constraint distribution and the input distribution in \cite{stratoglou2023virtual}.\hfill$\diamond$
\end{remark}

\begin{remark}
If $\mathcal Z_{\mathrm{slip}}$ is rendered invariant by feedback, then the closed-loop continuous-time dynamics restrict to a smooth reduced-order dynamics on $\mathcal Z_{\mathrm{slip}}$. This restricted dynamics describes the evolution of the mechanical system under the admissible slip kinematics encoded by the affine constraint $A(q)\dot q=b(q)$.\hfill$\diamond$
\end{remark}
\subsection{Control Design}

We now construct a feedback law that enforces the virtual affine nonholonomic constraint while simultaneously stabilizing the virtual holonomic outputs defining the desired gait.

In addition to regulating slip, the controller must enforce the virtual holonomic constraints defining the desired walking gait. Using the relative-degree-two structure described in Section~II, the output dynamics can be written as
\begin{equation}
\ddot y=H(q,\dot q,\lambda)+M_h(q,\dot q)u,
\label{eq:y_ddot_sec3}
\end{equation}
where $H(q,\dot q,\lambda):=L_f^2y(x)$ and $M_h(q,\dot q):=L_gL_fy(x)$.

Stacking \eqref{eq:slip_output_dynamics} and \eqref{eq:y_ddot_sec3}, we obtain the mixed relative-degree system
\begin{equation}
\begin{bmatrix}
\dot\eta_s\\[1mm]
\ddot y
\end{bmatrix}
=
\begin{bmatrix}
\Gamma_s\\[1mm]
H
\end{bmatrix}
+
\underbrace{\begin{bmatrix}
M_s\\[1mm]
M_h
\end{bmatrix}}_{=: \,\mathcal A(q,\dot q)}
u.
\label{eq:stacked_output_dynamics}
\end{equation}

Since $M_h(q,\dot q)\in\mathbb{R}^{(m-1)\times m}$, assume that $\operatorname{rank}M_h=m-1$ and let $n_h(q,\dot q)$ span $\ker M_h(q,\dot q)$. Then $\mathcal A(q,\dot q)$ is nonsingular if and only if $M_s(q)n_h(q,\dot q)\neq 0$, i.e., the slip-control direction is transverse to the null direction left by the holonomic outputs.

Assume that the contact force $\lambda$ is available as described above and that the closed-loop evolution remains in a regularity domain where $\sigma_{\min}(\mathcal A(q,\dot q))\geq\varepsilon>0$, here $\sigma_{\min}$ is the smallest eigenvalue of $\mathcal A(q,\dot q)$. This also provides a robustness margin, since nonsingularity is preserved under perturbations $\Delta\mathcal A$ satisfying $\|\Delta\mathcal A\|_2<\varepsilon$, where $m=k+1$ is the number of control inputs and $\|\cdot\|_{2}$ is the Frobenius norm. Let $k_s>0$ and let $K_p,K_d\in\mathbb{R}^{k\times k}$ be positive definite gain matrices. Consider the feedback law
\begin{equation}
u=
\mathcal A(q,\dot q)^{-1}
\begin{bmatrix}
-\Gamma_s(q,\dot q,\lambda)-k_s\eta_s\\[1mm]
-H(q,\dot q,\lambda)-K_d\dot y-K_p y
\end{bmatrix}.
\label{eq:combined_feedback}
\end{equation}
Under \eqref{eq:combined_feedback}, the closed-loop output dynamics become
\begin{align}
\dot\eta_s+k_s\eta_s&=0,
\label{eq:closed_loop_eta_s}
\\
\ddot y+K_d\dot y+K_p y&=0.
\label{eq:closed_loop_y}
\end{align}

Substituting the control law into the system dynamics yields the closed-loop hybrid system
\begin{align}
\dot x &= f_c(x), \quad x\notin S, \label{eq:closed_loop_flow_sec3}\\
x^+ &= \Delta(x^-), \quad x\in S. \label{eq:closed_loop_jump_sec3}
\end{align}

Hence, the manifold defined by $\eta_s=0$, $y=0$, and $\dot y=0$ is invariant under the closed-loop continuous-time dynamics, and the corresponding outputs converge exponentially to zero. The impact map governs the discrete transitions between successive steps.

\begin{proposition}
Assume that $\mathcal A(q,\dot q)$ is nonsingular on the stance phase domain, and define
\begin{align}
&\mathcal Z_{\mathrm{int}}
=
\mathcal Z\cap \mathcal Z_{\mathrm{slip}}
\label{zint}\\
&=
\left\{(q,\dot q)\in TQ \;\middle|\; y(q)=0,\ \frac{\partial y}{\partial q}(q)\dot q=0,\ A(q)\dot q=b(q)\right\}.
\nonumber
\end{align}
Assume moreover that the impact map satisfies
\begin{equation}
\Delta\bigl(\mathcal Z_{\mathrm{int}}\cap S\bigr)\subset \mathcal Z_{\mathrm{int}}.
\label{eq:impact_compatibility_zint}
\end{equation}
Then, under the feedback law \eqref{eq:combined_feedback}, the manifold $\mathcal Z_{\mathrm{int}}$ is invariant under the closed-loop continuous-time dynamics, and for every $x^-\in \mathcal Z_{\mathrm{int}}\cap S$ the corresponding post-impact state satisfies $x^+=\Delta(x^-)\in \mathcal Z_{\mathrm{int}}$. In particular, $\mathcal Z_{\mathrm{int}}$ defines a hybrid zero dynamics manifold compatible with slip.
\end{proposition}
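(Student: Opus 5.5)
The plan is to treat the three defining functions of $\mathcal Z_{\mathrm{int}}$ as a stacked output and use the closed-loop output dynamics \eqref{eq:closed_loop_eta_s}--\eqref{eq:closed_loop_y}. The impact part is then immediate from the hypothesis \eqref{eq:impact_compatibility_zint}.

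\emph{Continuous invariance.} Since $\mathcal A(q,\dot q)$ is nonsingular on the stance phase domain, the feedback \eqref{eq:combined_feedback} is well defined and smooth there. Substituting it into the stacked system \eqref{eq:stacked_output_dynamics} gives $\mathcal A\,\mathcal A^{-1}=I$, which cancels the drift terms $\Gamma_s$ and $H$. Along every closed-loop solution of \eqref{eq:closed_loop_flow_sec3} we therefore get the decoupled linear equations $\dot\eta_s=-k_s\eta_s$ and $\ddot y=-K_d\dot y-K_py$.

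Take an initial condition $x_0\in\mathcal Z_{\mathrm{int}}$. Then $\eta_s(0)=0$, $y(0)=0$ and $\dot y(0)=\frac{\partial y}{\partial q}(q_0)\dot q_0=0$. Because $y$ depends only on $q$, $\dot y$ is exactly the function $\frac{\partial y}{\partial q}\dot q$ appearing in \eqref{zint}, so these are precisely the defining equations. The output signals $t\mapsto\eta_s(x(t))$ and $t\mapsto(y(x(t)),\dot y(x(t)))$ solve linear homogeneous ODEs with zero initial data, so by uniqueness they vanish identically on the maximal interval of existence within the stance domain. Hence $x(t)\in\mathcal Z_{\mathrm{int}}$ for all such $t$, which is the invariance claim. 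This is the same argument as in the proof of the preceding proposition, now applied componentwise. It could be phrased alternatively as tangency of $f_c$ to $\mathcal Z_{\mathrm{int}}$, but the trajectory argument avoids needing $\mathcal Z_{\mathrm{int}}$ to be a submanifold for the invariance statement itself.

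\emph{Impact invariance.} For $x^-\in\mathcal Z_{\mathrm{int}}\cap S$, the hypothesis \eqref{eq:impact_compatibility_zint} gives $x^+=\Delta(x^-)\in\mathcal Z_{\mathrm{int}}$ directly. Combining the two parts, any hybrid execution starting in $\mathcal Z_{\mathrm{int}}$ stays in $\mathcal Z_{\mathrm{int}}$ along flows and jumps. The restriction of \eqref{eq:closed_loop_flow_sec3}--\eqref{eq:closed_loop_jump_sec3} to it is therefore a well-defined hybrid system, which is the definition of HZD given in Section~II applied to $\mathcal Z_{\mathrm{int}}$.

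\emph{Main obstacle.} The delicate point is justifying that $\mathcal Z_{\mathrm{int}}$ is a genuine embedded manifold, so that the phrase ``hybrid zero dynamics manifold'' is meaningful. I would argue this from the nonsingularity of $\mathcal A$. Write the defining map as $\Phi(q,\dot q)=(y,\ \frac{\partial y}{\partial q}\dot q,\ A\dot q-b)$. Its Jacobian with respect to $\dot q$ in the last $k+1$ rows is $\begin{bmatrix}\frac{\partial y}{\partial q}\\ A\end{bmatrix}$, and its $q$-Jacobian in the first $k$ rows is $\frac{\partial y}{\partial q}$. If $\begin{bmatrix}\frac{\partial y}{\partial q}\\ A\end{bmatrix}$ were rank deficient, then $\mathcal A=\begin{bmatrix}A\\ \frac{\partial y}{\partial q}\end{bmatrix}D^{-1}B$ would also be singular, since $M_h=L_gL_fy=\frac{\partial y}{\partial q}D^{-1}B$ and $M_s=AD^{-1}B$. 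Nonsingularity of $\mathcal A$ thus forces this matrix to have full row rank $k+1=m$. Consequently $\Phi$ is a submersion and $\mathcal Z_{\mathrm{int}}$ is an embedded submanifold of codimension $2k+1$. The flow restricted to it inherits smoothness from $f_c$.

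I expect this rank bookkeeping to be the only nonroutine step. Everything else follows from uniqueness of solutions of linear ODEs and the impact hypothesis.
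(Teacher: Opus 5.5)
Your proof is correct and follows the paper's argument. For continuous invariance, both use uniqueness of solutions of the decoupled linear output equations $\dot\eta_s=-k_s\eta_s$ and $\ddot y+K_d\dot y+K_py=0$ with zero initial data, and for the reset, both apply the impact-compatibility hypothesis directly.

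Your extra submersion step is also correct and fills in the embedded-submanifold property that the paper takes for granted. Since $M_s=AD^{-1}B$ and $M_h=\frac{\partial y}{\partial q}D^{-1}B$, nonsingularity of $\mathcal A$ forces $\begin{bmatrix}\frac{\partial y}{\partial q}\\ A\end{bmatrix}$ to have full row rank, so the block-triangular Jacobian $D\Phi$ has full rank $2k+1$.
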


\begin{proof}
Let $x=(q,\dot q)$ and suppose that $x(0)\in\mathcal Z_{\mathrm{int}}$. By definition of $\mathcal Z_{\mathrm{int}}$, we have
\[
y(0)=0,\, \frac{\partial y}{\partial q}(q(0))\dot q(0)=0,\, A(q(0))\dot q(0)-b(q(0))=0.
\]
Equivalently, $y(0)=0,\, \dot y(0)=0,\, \eta_s(0)=0$.

Under the feedback law \eqref{eq:combined_feedback}, the slip output satisfies the closed-loop equation
$\dot\eta_s+k_s\eta_s=0$. Since this is a linear homogeneous differential equation with initial condition $\eta_s(0)=0$, uniqueness of solutions implies that $\eta_s(t)\equiv 0$ for all times for which the solution exists. Hence the affine slip constraint is preserved along the continuous-time dynamics.

Likewise, the gait output satisfies $\ddot y+K_d\dot y+K_p y=0$. Since this is a linear homogeneous second-order system with initial conditions $y(0)=0$ and $\dot y(0)=0$, uniqueness of solutions implies that $y(t)\equiv 0,\, \dot y(t)\equiv 0$ for all times for which the solution exists. Hence the virtual holonomic constraints are preserved along the continuous-time dynamics.

Therefore, every trajectory initialized in $\mathcal Z_{\mathrm{int}}$ remains in $\mathcal Z_{\mathrm{int}}$ under the closed-loop continuous-time flow. That is, $\mathcal Z_{\mathrm{int}}$ is invariant under the flow \eqref{eq:closed_loop_flow_sec3}.

Now let $x^-\in \mathcal Z_{\mathrm{int}}\cap S$. By \eqref{eq:impact_compatibility_zint} the corresponding post-impact state satisfies $x^+=\Delta(x^-)\in \mathcal Z_{\mathrm{int}}$. Hence $\mathcal Z_{\mathrm{int}}$ is preserved by the impact map whenever the trajectory reaches the switching surface. It follows that $\mathcal Z_{\mathrm{int}}$ is invariant under both the continuous-time closed-loop dynamics and the discrete reset map. 

Therefore, $\mathcal Z_{\mathrm{int}}$ defines a hybrid invariant manifold on which both the virtual holonomic constraints and the slip constraint are enforced. In particular, it defines a hybrid zero dynamics manifold compatible with slip.
\end{proof}

\begin{definition}
The manifold $\mathcal Z_{\mathrm{int}}$ is called a \emph{slip-compatible hybrid zero dynamics manifold}.
\end{definition}

\begin{remark}
If the feedback law \eqref{eq:combined_feedback} is well defined and $\mathcal Z_{\mathrm{int}}$ is invariant under the continuous-time closed-loop dynamics, then the vector field $f_c$ restricts to a well-defined reduced-order continuous-time dynamics on $\mathcal Z_{\mathrm{int}}$. Along this restricted dynamics, the virtual holonomic constraints are exactly enforced and the tangential stance-foot velocity satisfies the prescribed slip law. Thus, $\mathcal Z_{\mathrm{int}}$ provides a reduced-order stance model that simultaneously captures walking dynamics and regulated slip motion.\hfill$\diamond$
\end{remark}

\subsection{Stability of Periodic Walking Gaits}

Let $x=(q,\dot q)$ denote the state of the closed-loop hybrid system and consider the dynamics \eqref{eq:closed_loop_flow_sec3}--\eqref{eq:closed_loop_jump_sec3}. A walking gait corresponds to a periodic orbit of the hybrid system. More precisely, a trajectory $x^\ast(t)$ represents a periodic gait if there exists $T>0$ such that $x^\ast(T^+) = x^\ast(0)$.

Under the virtual holonomic constraints introduced in Section~II and the virtual affine nonholonomic slip constraint developed in this section, the manifold $\mathcal Z_{\mathrm{int}}$ is invariant under the closed-loop continuous-time dynamics. To study stability of a periodic gait, we introduce coordinates measuring deviation from $\mathcal Z_{\mathrm{int}}$. Let $\eta =
\begin{bmatrix}
y\quad
\dot y\quad
\eta_s
\end{bmatrix}^{T}$ denote the transverse coordinates, where $y$ corresponds to the virtual holonomic output and $\eta_s$ is the slip output, which measures deviation from the prescribed slip law. Then
\[
\mathcal Z_{\mathrm{int}}
=
\{(q,\dot q)\in TQ \mid y=0,\ \dot y=0,\ \eta_s=0 \}.
\]

Under the feedback law \eqref{eq:combined_feedback}, the transverse dynamics satisfy \eqref{eq:closed_loop_eta_s}--\eqref{eq:closed_loop_y}. In terms of the state vector $\eta$, they can be written as $\dot\eta = A_\perp \eta$, with
\[
A_\perp =
\begin{bmatrix}
0 & I & 0 \\
-K_p & -K_d & 0 \\
0 & 0 & -k_s
\end{bmatrix}.
\]

Since $k_s>0$ and the second-order system $\ddot y+K_d\dot y+K_p y=0$ is exponentially stable, the matrix $A_\perp$ is Hurwitz. Therefore, the manifold $\mathcal Z_{\mathrm{int}}$ is locally exponentially attractive under the closed-loop continuous-time dynamics.

Let $\Sigma := \mathcal Z_{\mathrm{int}} \cap S$ denote a transversal Poincaré section. The associated Poincaré return map $P:\Sigma \to \Sigma$ assigns, whenever it is well defined, to each point in $\Sigma$ the state obtained after one complete step of the hybrid system. A fixed point $x^\ast \in \Sigma$ satisfying $P(x^\ast)=x^\ast$ corresponds to a periodic walking gait.

The stability of a walking gait is determined by the linearization of the Poincaré map at $x^\ast$, denoted by $A_P := DP(x^\ast)$. If all eigenvalues of $A_P$ lie strictly inside the unit disk, then the fixed point $x^\ast$ is locally exponentially stable, and the associated periodic orbit is locally exponentially orbitally stable.

\begin{theorem}
Assume that the manifold $\mathcal Z_{\mathrm{int}}$ is invariant under the continuous-time dynamics and satisfies $\Delta(\mathcal Z_{\mathrm{int}}\cap S)\subset \mathcal Z_{\mathrm{int}}$. Assume moreover that the reduced Poincaré map on $\Sigma=\mathcal Z_{\mathrm{int}}\cap S$ is well defined and admits a fixed point $x^\ast\in\Sigma$.
Then the full hybrid locomotion system admits a periodic walking gait passing through $x^\ast$.
\end{theorem}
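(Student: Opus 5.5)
The plan is to build the periodic orbit of the full closed-loop hybrid system \eqref{eq:closed_loop_flow_sec3}--\eqref{eq:closed_loop_jump_sec3} directly from the fixed point $x^\ast$ of the reduced Poincaré map. Invariance of $\mathcal Z_{\mathrm{int}}$, together with the impact compatibility $\Delta(\mathcal Z_{\mathrm{int}}\cap S)\subset\mathcal Z_{\mathrm{int}}$ from the preceding Proposition, should show that the reduced orbit is also an orbit of the full system.

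\textbf{Step 1 (post-impact initial condition).} Set $x_0:=\Delta(x^\ast)$. Since $x^\ast\in\Sigma=\mathcal Z_{\mathrm{int}}\cap S$, the impact compatibility assumption gives $x_0\in\mathcal Z_{\mathrm{int}}$. Hence $y=0$, $\dot y=0$ and $\eta_s=0$ at $x_0$.

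\textbf{Step 2 (flow stays on the manifold).} Let $\varphi_t(x_0)$ be the solution of the full closed-loop flow $\dot x=f_c(x)$ starting at $x_0$. By invariance of $\mathcal Z_{\mathrm{int}}$, which in the previous Proposition follows from uniqueness of solutions of \eqref{eq:closed_loop_eta_s}--\eqref{eq:closed_loop_y} with zero initial data, we have $\varphi_t(x_0)\in\mathcal Z_{\mathrm{int}}$ for as long as the solution exists. Consequently the full-order trajectory coincides with the trajectory of the restricted vector field $f_c|_{\mathcal Z_{\mathrm{int}}}$, again by uniqueness of solutions of the smooth ODE.

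\textbf{Step 3 (return time and reset).} Well-definedness of the reduced Poincaré map means there is a first time $T_I>0$ at which the reduced trajectory from $x_0$ reaches $S$, and that it hits $S$ at $P(x^\ast)=x^\ast$. By Step 2 the full trajectory reaches the same point at the same time. Since it is the same curve, there is no earlier crossing of $S$ by the full trajectory either. Applying the reset gives $x(T_I^+)=\Delta(x^\ast)=x_0$.

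\textbf{Step 4 (periodicity).} Define $x^\ast(t)$ on $[0,T_I)$ by the flow from $x_0$ and extend it periodically. By construction $x^\ast(T_I^+)=x^\ast(0)$, which is exactly the definition of a periodic gait with period $T=T_I$. The orbit passes through $x^\ast$ as its pre-impact state. (Equivalently, under the paper's convention of placing the section on $S$, the trajectory can be started at $x^\ast$ itself.)

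The main obstacle is Step 3. One must ensure that the notion of "well defined" for the reduced map supplies a finite first hitting time with the trajectory remaining in the stance-phase domain, where $\mathcal A$ is nonsingular and $\lambda$ is admissible. Otherwise the feedback \eqref{eq:combined_feedback} may cease to exist before impact. I would handle this by taking these conditions as part of the hypothesis that $P$ is well defined on $\Sigma$. The remaining point, that the first crossing of $S$ is the same for the full and reduced systems, follows directly from trajectory coincidence.
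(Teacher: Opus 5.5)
Your proposal is correct and follows essentially the same argument as the paper. A fixed point of the reduced return map gives a periodic orbit of the hybrid dynamics restricted to $\mathcal Z_{\mathrm{int}}$, and by hybrid invariance this is also an orbit of the full closed-loop system. Your version is more explicit on two points the paper leaves implicit: that the full and restricted trajectories coincide by uniqueness, so the first impact time is the same, and that ``well defined'' must supply a finite first hitting time inside the stance-phase regularity domain.
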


\begin{proof}
Since $\mathcal Z_{\mathrm{int}}$ is invariant under the continuous-time dynamics and is preserved by the reset map, the hybrid dynamics restricted to $\mathcal Z_{\mathrm{int}}$ is well defined. Hence the reduced Poincaré map on $\Sigma=\mathcal Z_{\mathrm{int}}\cap S$ is well posed whenever the corresponding return trajectory exists.

If $x^\ast\in\Sigma$ is a fixed point of the reduced Poincaré map, then the reduced hybrid trajectory starting from $x^\ast$ returns to $x^\ast$ after one step. So, it defines a periodic orbit of the reduced hybrid dynamics on $\mathcal Z_{\mathrm{int}}$. Since $\mathcal Z_{\mathrm{int}}$ is an invariant hybrid manifold of the full system, this orbit is also a periodic orbit of the full hybrid system. Hence the full system admits a periodic walking gait passing through $x^\ast$.
\end{proof}

The following result is a direct consequence of standard Poincaré-map arguments for hybrid systems and restricted hybrid zero dynamics; see, e.g., \cite{morris2005restricted, Westervelt2007}.

\begin{theorem}
Consider the closed-loop hybrid locomotion system with feedback law \eqref{eq:combined_feedback}. Suppose that
\begin{enumerate}
\item the manifold $\mathcal Z_{\mathrm{int}}$ is invariant under the continuous-time dynamics and satisfies $\Delta(\mathcal Z_{\mathrm{int}}\cap S)\subset \mathcal Z_{\mathrm{int}}$,
\item the restricted Poincaré map $P:\Sigma\to\Sigma$, with $\Sigma=\mathcal Z_{\mathrm{int}}\cap S$, is well defined and admits a fixed point $x^\ast$,
\item all eigenvalues of the Jacobian $A_P:=DP(x^\ast)$ lie strictly inside the unit disk.
\end{enumerate}
Then the periodic walking gait associated with $x^\ast$ is locally exponentially orbitally stable.
\end{theorem}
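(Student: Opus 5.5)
The plan is to reduce the statement to the restricted Poincaré map result of \cite{morris2005restricted} (see also \cite{Westervelt2007}). To do this, I will write the full-order Poincaré map of the closed-loop system \eqref{eq:closed_loop_flow_sec3}--\eqref{eq:closed_loop_jump_sec3} in coordinates adapted to $\mathcal Z_{\mathrm{int}}$. I will then show that its Jacobian at $x^\ast$ is block triangular, with $A_P$ as one diagonal block and a transverse block as the other.

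\textbf{Step 1 (coordinates and the full-order map).} Using the nonsingularity of $\mathcal A(q,\dot q)$ and the constant-rank assumptions, the map $x\mapsto(\xi,\eta)$ is a local diffeomorphism near the periodic orbit. Here $\eta=(y,\dot y,\eta_s)$ are the transverse coordinates introduced above, and $\xi$ are coordinates on $\mathcal Z_{\mathrm{int}}$. I will define the full Poincaré map $P_{\mathrm{full}}:S\to S$ near $x^\ast$. The first step is to show that the time-to-impact function $T_I$ is well defined and smooth near $\Delta(x^\ast)$. This follows from the implicit function theorem, applied to $h_{\mathrm{sw}}(\phi_t(x))=0$ together with $\dot h_{\mathrm{sw}}<0$ on $S$. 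Then $P_{\mathrm{full}}(x)=\phi_{T_I(\Delta(x))}(\Delta(x))$. Hypothesis 2 says that $P_{\mathrm{full}}$ restricted to $\Sigma$ is $P$, and that $x^\ast$ is a fixed point.

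\textbf{Step 2 (block-triangular structure).} By hypothesis 1, $\mathcal Z_{\mathrm{int}}$ is invariant under both the flow and $\Delta$. Hence $P_{\mathrm{full}}$ maps $\Sigma$ into $\Sigma$, and in $(\xi,\eta)$ coordinates its $\eta$-component vanishes when $\eta=0$. Differentiating at $x^\ast$ gives
\[
DP_{\mathrm{full}}(x^\ast)=\begin{bmatrix} A_P & \ast\\ 0 & \Phi\end{bmatrix},
\]
so the spectrum of $DP_{\mathrm{full}}(x^\ast)$ is the union of the spectra of $A_P$ and $\Phi$. For the transverse block, the key fact is that under \eqref{eq:combined_feedback} the transverse dynamics are exactly linear, $\dot\eta=A_\perp\eta$, independently of $\xi$. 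Therefore
\[
\Phi=e^{A_\perp T^\ast}\,\frac{\partial \eta^+}{\partial\eta}\Big|_{x^\ast}+(\text{terms from the variation of }T_I).
\]
The extra terms vanish along the $\eta$-direction at $\eta=0$, because $\dot\eta=0$ on $\mathcal Z_{\mathrm{int}}$.

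\textbf{Step 3 (transverse contraction; main obstacle).} The delicate point is showing that $\Phi$ has spectral radius less than one, since $\partial\eta^+/\partial\eta$ comes from the impact map and is not controlled by the hypotheses directly. I would first follow \cite{morris2005restricted}. Because $A_\perp$ is Hurwitz, $\|e^{A_\perp T^\ast}\|\le c\,e^{-\beta T^\ast}$, where $\beta$ is set by $k_s,K_p,K_d$ in \eqref{eq:combined_feedback}. I would then invoke their restricted Poincaré theorem, which states precisely that invariance of the hybrid zero dynamics manifold (hypothesis 1), together with exponential stability of the restricted fixed point (hypotheses 2--3), yields exponential stability of the full-order fixed point for the exponentially attractive transverse dynamics. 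This is the reading under which the paper calls the result a direct consequence of \cite{morris2005restricted,Westervelt2007}. I will verify that their standing assumptions hold here: smoothness of the closed-loop flow, transversality of the orbit to $S$, and a uniform exponential rate of convergence of $\eta$.

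\textbf{Step 4 (conclusion).} With both diagonal blocks having spectral radius less than one, $x^\ast$ is a locally exponentially stable fixed point of $P_{\mathrm{full}}$. The standard equivalence between exponential stability of the fixed point of a Poincaré map and local exponential orbital stability of the periodic orbit then applies. This equivalence uses continuity of $T_I$ and boundedness of the flow over one step, and it gives that the periodic gait through $x^\ast$ (which exists by the preceding theorem) is locally exponentially orbitally stable.
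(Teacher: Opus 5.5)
Your route is the paper's: both arguments reduce the claim to the restricted Poincaré map theorem of \cite{morris2005restricted}. Your Steps 1--2 usefully make explicit what the paper leaves implicit. Nonsingularity of $\mathcal A$ does give independence of $(y,\dot y,\eta_s)$, and the $T_I$-variation term does drop out since $\eta(\Delta(x^\ast))=0$. The gap is in Step 3, exactly where you flag the delicate point. Your own computation yields
\[
\Phi=e^{A_\perp T^\ast}\,\frac{\partial(\eta\circ\Delta)}{\partial\eta}(x^\ast),
\]
and hypotheses 1--3 impose nothing on the impact sensitivity $\partial(\eta\circ\Delta)/\partial\eta$. For the fixed gains $k_s,K_p,K_d$ in \eqref{eq:combined_feedback}, a Hurwitz $A_\perp$ only gives $\|e^{A_\perp T^\ast}\|\le c\,e^{-\beta T^\ast}$. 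That bound need not beat the amplification of transverse errors at touchdown. For instance, suppose $\partial(\eta\circ\Delta)/\partial\eta$ is block diagonal with slip entry $\gamma$, $|\gamma|>e^{k_sT^\ast}$. Then $\rho(\Phi)>1$, and $x^\ast$ is unstable for the full map even though $A_P$ is Schur.

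The cited theorem does not say what you attribute to it. It is a high-gain result: the transverse convergence rate is scaled by a parameter $\epsilon$ (e.g.\ $K_p=\bar K_p/\epsilon^2$, $K_d=\bar K_d/\epsilon$). It asserts equivalence of stability of the full and restricted fixed points only for $0<\epsilon<\bar\epsilon$. Exponential attractivity of $\mathcal Z_{\mathrm{int}}$ alone is not enough.

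To close the gap you need an extra hypothesis, and there are two options. The first is to assume directly that $\rho\bigl(e^{A_\perp T^\ast}\,\partial(\eta\circ\Delta)/\partial\eta\bigr)<1$. The second is to scale all gains ($K_p=\bar K_p/\epsilon^2$, $K_d=\bar K_d/\epsilon$, $k_s=\bar k_s/\epsilon$). Then $x^\ast$, $T^\ast$, $A_P$ and $\partial(\eta\circ\Delta)/\partial\eta$ are independent of $\epsilon$, because on $\mathcal Z_{\mathrm{int}}$ the law \eqref{eq:combined_feedback} reduces to its gain-free part $u=-\mathcal A^{-1}\bigl[\Gamma_s;\,H\bigr]$. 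Hence $e^{A_\perp T^\ast}\to0$ as $\epsilon\to0$, so $\rho(\Phi)<1$ for small $\epsilon$, and your Steps 2 and 4 complete the proof.

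The paper's proof invokes the same references with the same omission. So the statement as written, for arbitrary fixed gains, is not implied by hypotheses 1--3 alone. Your explicit formula for $\Phi$ is exactly what exposes the missing condition.
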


\begin{proof}
By the invariance of $\mathcal Z_{\mathrm{int}}$ under the continuous-time dynamics and the reset condition $\Delta(\mathcal Z_{\mathrm{int}}\cap S)\subset \mathcal Z_{\mathrm{int}}$, the hybrid dynamics restricted to $\mathcal Z_{\mathrm{int}}$ are well defined. Thus the restricted Poincaré map on $\Sigma=\mathcal Z_{\mathrm{int}}\cap S$ is well posed whenever the corresponding return trajectory exists.


By Assumption~2, $x^\ast$ is a fixed point of the restricted Poincaré map. By Assumption~3, all eigenvalues of $A_P=DP(x^\ast)$ lie strictly inside the unit disk. Therefore, $x^\ast$ is a locally exponentially stable fixed point of the restricted return map. Since $\Sigma\subset \mathcal Z_{\mathrm{int}}$ and the transverse closed-loop dynamics converge exponentially to $\mathcal Z_{\mathrm{int}}$, standard results on restricted Poincaré maps and hybrid zero dynamics imply that the corresponding periodic orbit of the full hybrid locomotion system is locally exponentially orbitally stable; see, e.g., \cite{morris2005restricted, Westervelt2007}.
\end{proof}

\section{Numerical Results}

For the numerical study, we consider a structured planar 7-DOF hybrid biped model inspired by the slip-walking framework of Chen, Trkov, and Yi~\cite{chen2017hybrid,TrkovChenYi2019}. The model is chosen so as to remain consistent with the control architecture developed in Sections~II--III: one coordinate describes the longitudinal progression of the gait, while the remaining six coordinates are regulated through virtual holonomic outputs parameterized by B\'ezier polynomials. 

In the numerical implementation, the scalar virtual  nonholonomic constraint is specified by $\eta_s(q,\dot q)=A(q)\dot q-b(q)$, with 
$A(q)\equiv
\begin{bmatrix}
1 & 0 & 0.10 & 0.08 & 0.04 & -0.05 & -0.03
\end{bmatrix}$, while the prescribed slip law is taken as $$b(q)\equiv b_k=v_{\mathrm{nom}}+s_k$$ during the $k$-th stance phase, where $s_k$ denotes the imposed step-dependent slip level. Thus, in the simulations, $A(q)$ is chosen constant and $b(q)$ is implemented as a piecewise-constant reference from step to step. 

Although the simulation model is a structured surrogate rather than a full rigid-body Euler--Lagrange realization, it preserves the dimensional and geometric structure required by the proposed control law: six holonomic outputs, one scalar slip output, and a square stacked decoupling matrix. Accordingly, the numerical study is intended to validate the control
architecture and its regularity properties, rather than to provide a
high-fidelity validation of frictional contact mechanics.

To assess consistency with the theoretical assumptions, we numerically evaluate along the simulated trajectories the regularity conditions used in the control design. In particular, we monitor the scalar slip decoupling term $M_s(q)$ and the smallest singular value of the stacked matrix
$\mathcal A(q,\dot q)=
\begin{bmatrix}
M_s(q)\\
M_h(q,\dot q)
\end{bmatrix}$. Table~\ref{tab:assumptions} summarizes the main indicators. The quantities $\min \sigma_{\min}(\mathcal A)$ and $\min \|M_s\|$ act as structural regularity measures for the simulated model rather than direct performance metrics. In the present realization, they depend only weakly on the closed-loop evolution, which explains why their values are identical in the open-loop and controlled cases. By contrast, the number of successful steps, the terminal pre-impact speed, and the mean slip-output magnitude clearly distinguish the two responses. In particular, under the proposed feedback law the slip output remains close to zero throughout the simulation, while both $M_s(q)$ and $\mathcal A(q,\dot q)$ remain uniformly away from singularity. These observations support, at least numerically for the simulated gait, the assumptions underlying Propositions~1--2 and the stability analysis of Section~III-C.

\begin{table}[h!]
\centering
\small
\setlength{\tabcolsep}{4pt}
\caption{Numerical verification of regularity and performance indicators along the simulated gait.}
\label{tab:assumptions}
\begin{tabular}{lcc}
\hline
Quantity & Open loop & Controlled \\
\hline
Successful steps & 29 & 50 \\
Terminal pre-impact speed $v^\ast$ & 0.2565 & 0.5964 \\
Mean $|\eta_s|$ & 0.6505 & 0.0041 \\
$\min \sigma_{\min}(\mathcal A)$ & 0.4417 & 0.4417 \\
$\min \|M_s\|$ & 1.0106 & 1.0106 \\
\hline
\end{tabular}
\end{table}

To make the numerical procedure explicit, the simulation loop underlying the experiments is summarized as follows.

\begin{algorithm}[h!]
\caption{Hybrid simulation with virtual holonomic and nonholonomic constraints}
\label{alg:simulation}
\begin{algorithmic}[1]
\State Choose a step horizon $N\in\mathbb{N}$ and a step-dependent slip sequence $\{s_k\}_{k=1}^N$.
\State Initialize the state $x_0=(q_0,\dot q_0)$ near the nominal gait.
\For{$k=1,\dots,N$}
    \State Define the stepwise slip reference $b_k$ from the prescribed slip level $s_k$.
    \State Construct the virtual holonomic outputs
    $
    y(q)=h(q)-h_d(\theta(q)),
    $
    where $h_d$ is parameterized by B\'ezier polynomials.
    \State Define the slip output
    $
    \eta_s(q,\dot q)=A(q)\dot q-b_k.
    $
    \State Compute the continuous-time feedback law enforcing both the holonomic and nonholonomic constraints.
    \State Integrate the closed-loop continuous-time dynamics until the impact event
    $
    h_{\mathrm{sw}}(q)=0,\ \dot h_{\mathrm{sw}}(q,\dot q)<0.
    $
    \State Apply the reset map and the leg-swap operation to obtain the next initial condition.
    \State Record the pre-impact velocity, the slip-output history, and the regularity indicators associated with
    $
    M_s(q)
    $
    and the stacked decoupling matrix
    $
    \mathcal A(q,\dot q).
    $
\EndFor
\end{algorithmic}
\end{algorithm}

Algorithm~\ref{alg:simulation} describes the general computational structure used in the numerical study. 
The main experiment uses a 50-step slip sequence with piecewise-constant levels $s_k \in \{0,\;0.015,\;0.03,\;0,\;0.02\}$, applied over consecutive step intervals to emulate alternating ground patches of different tangential slip amounts. 

In the open-loop case, the virtual holonomic outputs are still tracked by the nominal continuous gait controller, but the scalar slip-regulation channel is not activated. In this case, the gait eventually loses stability after 29 successful steps, whereas under the proposed feedback law the walker completes 50 steps without failure. Figure~\ref{fig:variableslip} shows that, under variable slip conditions, the open-loop gait progressively loses forward speed and ultimately fails, while the controlled gait reorganizes after each change in the prescribed slip level and remains persistent over the full simulation horizon. At the same time, the controlled case maintains a very small slip-output error, in agreement with the imposed constraint manifold.

Figure~\ref{fig:hippath} shows the hip trajectory in the sagittal plane for the open-loop and controlled cases. The controlled gait preserves a coherent geometric walking pattern throughout the simulation, in agreement with the step-to-step stabilization shown in Fig.~\ref{fig:variableslip}. Overall, the numerical results indicate that the proposed slip-compatible control design can simultaneously regulate the slip output and preserve robust periodic walking under nonuniform contact conditions.

\begin{figure}[h!]
    \centering
    \includegraphics[width=\columnwidth]{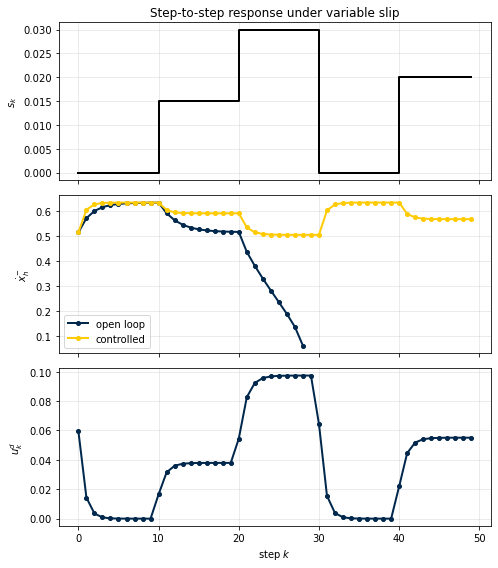}
    \caption{Step-to-step response under variable slip conditions. Top: prescribed slip level at each step. Middle: pre-impact forward velocity for the open-loop and controlled cases. Bottom: slip-output magnitude under the controlled case.}
    \label{fig:variableslip}
\end{figure}

\begin{figure}[h!]
    \centering
    \includegraphics[width=0.9\columnwidth]{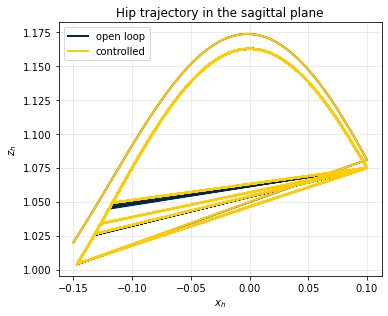}
    \caption{Hip trajectory in the sagittal plane for the open-loop and controlled cases under variable slip conditions.}
    \label{fig:hippath}
\end{figure}

A video illustrating the controlled walking motion under the same variable-slip scenario is provided in \textcolor{blue}{\url{https://youtu.be/EWsuVqacxpU}} The video shows the persistence of the gait over the full 50-step horizon, the adaptation of the walker to the piecewise-constant slip sequence, and the maintenance of a coherent sagittal-plane motion while the slip output remains small throughout the simulation. This evidence complements Figs.~\ref{fig:variableslip}--\ref{fig:hippath} and further supports the effectiveness of the proposed slip-compatible control design.

\section{Conclusions and Future Work}

We developed a control framework for bipedal walking with foot slip based on virtual nonholonomic constraints. The proposed design regulates tangential stance-foot motion while remaining compatible with the virtual holonomic constraints used for gait generation, yielding a slip-compatible hybrid zero dynamics manifold and an orbital stability characterization via the reduced Poincar\'e map. Numerical simulations showed stable walking under variable slip conditions together with effective regulation of the slip output. Future work includes extending the framework to full
Euler-Lagrange models with explicit contact and friction dynamics,
including stick-slip transitions and contact-force feasibility, and
studying more general or terrain-adaptive slip laws.



\bigskip

{\bf Acknowledgment:} We would like to thank Jessy Grizzle for introducing us to the problem of walking with foot slip and for his inspiring contributions to robotic locomotion. 

\bibliographystyle{IEEEtran}
\bibliography{biblio}

\end{document}